\newcommand{\pconv}{\overset{p}{\to}}
\newcommand{\de}{\,\mathrm{d}}
\newcommand{\seta}{\ensuremath{\mathcal{A}}\xspace}
\newcommand{\setb}{\ensuremath{\mathcal{B}}\xspace}
\newcommand{\setc}{\ensuremath{\mathcal{C}}\xspace}
\newcommand{\seti}{\ensuremath{\mathcal{I}}\xspace}
\newcommand{\sets}{\ensuremath{\mathcal{S}}\xspace}
\newcommand{\sett}{\ensuremath{\mathcal{T}}\xspace}
\newcommand{\setx}{\ensuremath{\mathcal{X}}\xspace}
\newcommand{\sety}{\ensuremath{\mathcal{Y}}\xspace}
\DeclareMathOperator{\expop}{\mathbb{E}}
\DeclareMathOperator{\entop}{\mathbb{H}}
\DeclareMathOperator{\idop}{\mathbb{D}}
\DeclareMathOperator{\miop}{\mathbb{I}}
\DeclareMathOperator{\supp}{supp}
\DeclareMathOperator*{\argmax}{argmax}
\newcommand{\veca}{\boldsymbol{a}}
\newcommand{\vecb}{\boldsymbol{b}}
\newcommand{\vecB}{\boldsymbol{B}}
\newcommand{\vecx}{\boldsymbol{x}}
\newcommand{\vecX}{\boldsymbol{X}}
\newcommand{\vecy}{\boldsymbol{y}}
\newcommand{\vecY}{\boldsymbol{Y}}
\newcommand{\ogeq}[1]{\overset{\text{(#1)}}{\geq}}
\newcommand{\bpm}{\begin{pmatrix}}
\newcommand{\epm}{\end{pmatrix}}
\newcommand{\bbm}{\begin{bmatrix}}
\newcommand{\ebm}{\end{bmatrix}}
\theoremstyle{plain}
\newtheorem{lemma}{Lemma}
\newtheorem{proposition}{Proposition}
\theoremstyle{remark}
\newtheorem{remark}{Remark}
\theoremstyle{definition}
\definecolor{examplegray}{rgb}{.95,.95,.95}
\newtheorem{mdexample}{Example}[section]
\newenvironment{example}%
  {\begin{mdframed}[backgroundcolor=examplegray,hidealllines=true]\begin{mdexample}}%
  {\end{mdexample}\end{mdframed}}
\newcounter{problemcount}[section]
\newcommand{\rcode}{R_\textnormal{c}}
\newcommand{\rbcode}{R_\textnormal{b}}
\newcommand{\tildercode}{\tilde{R}_\textnormal{c}}
\newcommand{\rlm}{R_\textnormal{LM}}
\newcommand{\rtrans}{R_\textnormal{tx}}
\newcommand{\tcode}{T_\textnormal{c}}
\newcommand{\tbcode}{T_\textnormal{abc}}
\newcommand{\ttrans}{R_\textnormal{\sffamily ps}}
\newcommand{\hattcode}{\hat{T}_\textnormal{c}}
\newcommand{\xop}{\mathbb{X}}
\title{Achievable Rates for Probabilistic Shaping}
\author{Georg B\"ocherer\\
Mathematical and Algorithmic Sciences Lab\\
Huawei Technologies France S.A.S.U.\\
\url{georg.boecherer@ieee.org}}
\begin{document}

\maketitle

\begin{abstract}
For a layered probabilistic shaping (PS) scheme with a general decoding metric, an achievable rate is derived using Gallager's error exponent approach and the concept of achievable code rates is introduced. Several instances for specific decoding metrics are discussed, including bit-metric decoding, interleaved coded modulation, and hard-decision decoding. It is shown that important previously known achievable rates can also be achieved by layered PS. A practical instance of layered PS is the recently proposed probabilistic amplitude shaping (PAS).
\end{abstract}
\section{Introduction}
Communication channels often have non-uniform capacity-achieving input distributions, which is the main motivation for probabilistic shaping (PS), i.e., the development of practical transmission schemes that use non-uniform input distributions. Many different PS schemes have been proposed in literature, see, e.g., the literature review in \cite[Section~II]{bocherer2015bandwidth}.

In \cite{bocherer2015bandwidth}, we proposed probabilistic amplitude shaping (PAS), a layered PS architecture that concatenates a distribution matcher (DM) with a systematic encoder of a forward error correcting (FEC) code. In a nutshell, PAS works as follows. The DM serves as a shaping encoder and maps data bits to non-uniformly distributed (`shaped') amplitude sequences, which are then systematically FEC encoded, preserving the amplitude distribution. The additionally generated redundancy bits are mapped to sign sequences that are multiplied entrywise with the amplitude sequences, resulting in a capacity-achieving input distribution for the practically relevant discrete-input additive white Gaussian noise channel.

In this work, we take an information-theoretic perspective and use random coding arguments following Gallager's error exponent approach~\cite[Chapter~5]{gallager1968information} to derive achievable rates for a layered PS scheme of which PAS is a practical instance. Because rate and FEC code rate are different for layered PS, we introduce achievable \emph{code} rates.  The proposed achievable rate is amenable to analysis and we instantiate it for several special cases, including bit-metric decoding, interleaved coded modulation, hard-decision decoding, and binary hard-decision decoding.

Section~\ref{sec:preliminaries} provides preliminaries and notation. We define the layered PS scheme in Section~\ref{sec:layered ps}. In Section~\ref{sec:results}, we state and discuss the main results for a generic decoding metric. We discuss metric design and metric assessment in Section~\ref{sec:metric design} and Section~\ref{sec:metric assessment}, respectively. The main results are proven in Section~\ref{sec:proofs}.

\section{Preliminaries}
\label{sec:preliminaries}

\subsection{Empirical Distributions}
Let $\setx$ be a finite set and consider a length $n$ sequence $x^n=x_1x_2\dotsb x_n$ with entries $x_i\in\setx$. Let $N(a|x^n)$ be the number of times that letter $a\in\setx$ occurs in $x^n$, i.e.,
\begin{align}
N(a|x^n)=|\{i\colon x_i=a\}|.
\end{align}
The empirical distribution (type) of $x^n$ is
\begin{align}
P_X(a)=\frac{N(a|x^n)}{n},\qquad a\in\setx.
\end{align}
The type $P_X$ can also be interpreted as a probability distribution $P_X$ on $\setx$, assigning to each letter $a\in\setx$ the probability $\Pr(X=a)=P_X(a)$. The concept of letter-typical sequences as defined in \cite[Section~1.3]{kramer2007topics} describes a set of sequences that have approximately the same type. For $\epsilon\geq 0$, we say $x^n$ is \emph{$\epsilon$-letter-typical} with respect to $P_X$ if for each letter $a\in\setx$,
\begin{align}
(1-\epsilon)P_X(a)\leq\frac{N(a|x^n)}{n}\leq (1+\epsilon)P_X(a),\quad\forall a\in\setx.\label{eq:def:typ}
\end{align}
The sequences \eqref{eq:def:typ} are called typical in \cite[Section~3.3]{masseyapplied1},\cite[Section~2.4]{elgamal2011network} and robust typical in \cite[Appendix]{orlitsky2001coding}.
We denote the set of letter typical sequences by $\sett^n_\epsilon(P_X)$.
\subsection{Expectations}
For a real-valued function $f$ on $\setx$, the expectation of $f(X)$ is
\begin{align}
\expop[f(X)]=\sum_{a\in\supp P_X}P_X(a)f(a)
\end{align}
where $\supp P_X$ is the support of $P_X$. The conditional expectation is
\begin{align}
\expop[f(X)|Y=b]=\sum_{a\in\supp P_X}P_{X|Y}(a|b)f(a)
\end{align}
where for each $b\in\sety$, $P_{X|Y}(\cdot|b)$ is a distribution on $\setx$. Accordingly
\begin{align}
\expop[f(X)|Y]=\sum_{a\in\supp P_X}P_{X|Y}(a|Y)f(a)
\end{align}
is a random variable and
\begin{align}
\expop[\expop[f(X)|Y]]=\expop[f(X)].
\end{align}

\subsection{Information Measures}
Entropy of a discrete distribution $P_X$ is
\begin{align}
\entop(P_X)=\entop(X)=\expop[-\log_2 P_X(X)].
\end{align}
The conditional entropy is
\begin{align}
\entop(X|Y)=\expop[-\log_2 P_{X|Y}(X|Y)]
\end{align}
and the mutual information is
\begin{align}
  \miop(X;Y)=\entop(X)-\entop(X|Y).
\end{align}
The cross-entropy of two distributions $P_X,P_Z$ on $\setx$ is
\begin{align}
\xop(P_X\Vert P_Z)=\expop\left[-\log_2 P_Z(X)\right].\label{eq:cross entropy}
\end{align}
Note that the expectation in \eqref{eq:cross entropy} is taken with respect to $P_X$. The informational divergence of two distributions $P_X,P_Z$ on $\setx$ is
\begin{align}
\idop(P_X\Vert P_Z)&=\expop\left[\log_2\frac{P_X(X)}{P_Z(X)}\right]\\
&=\xop(P_X\Vert P_Z)-\entop(X).
\end{align}
We define the uniform distribution on $\setx$ as
\begin{align}
P_U(a)=\frac{1}{|\setx|},\quad a\in\setx.
\end{align}
We have
\begin{align}
  \idop(P_X\Vert P_U)=\entop(U)-\entop(X)=\log_2|\setx|-\entop(X).\label{eq:uniform divergence}
\end{align}
The information inequality states that
\begin{align}
\idop(P_X\Vert P_Z)\geq 0\label{eq:information inequality}
\end{align}
and equivalently
\begin{align}
\xop(P_X\Vert P_Z)\geq \entop(P_X)
\end{align}
with equality if and only if $P_X=P_Z$.

\section{Layered Probabilistic Shaping}
\label{sec:layered ps}
\begin{figure}[t]
\includegraphics{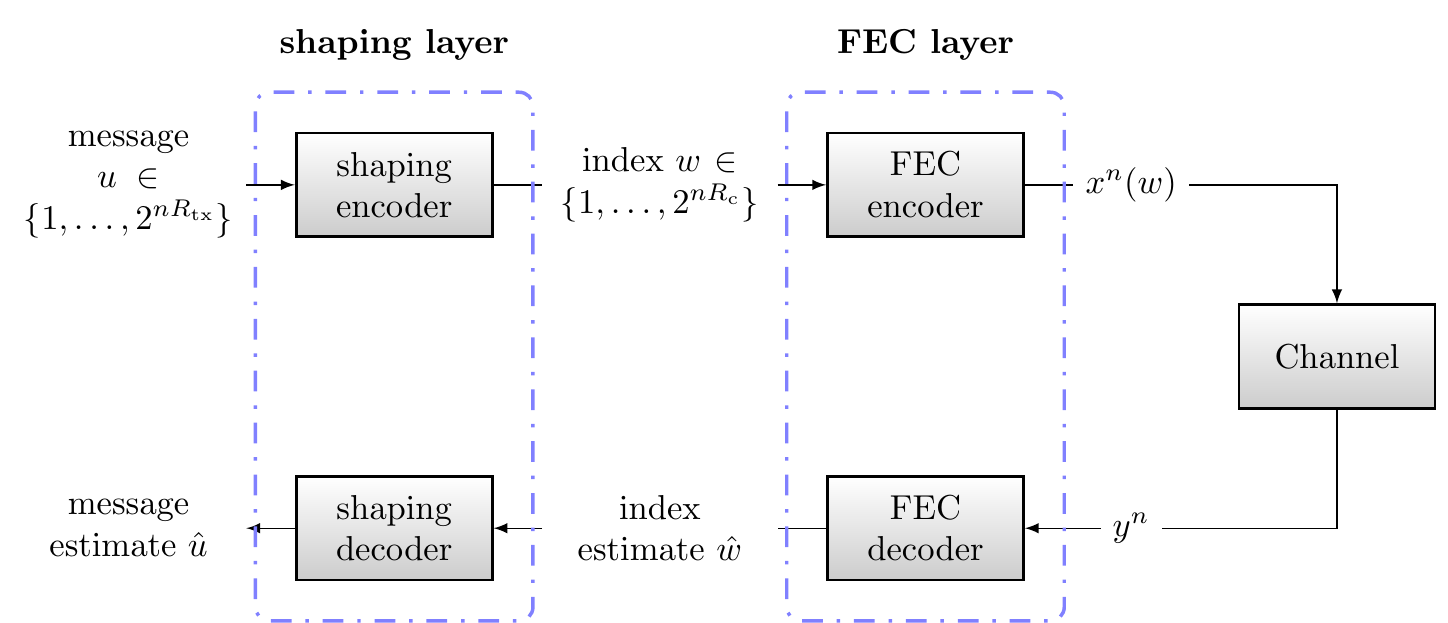}
\caption{Layered PS capturing the essence of PAS proposed in \cite{bocherer2015bandwidth}.}
\label{fig:layered ps}
\end{figure}
We consider the following transceiver setup (see also Figure~\ref{fig:layered ps}):
\begin{itemize}
    \item We consider a \emph{discrete-time channel} with input alphabet $\setx$ and output alphabet $\sety$. We derive our results assuming continuous-valued output. Our results also apply for discrete output alphabets.
    \item \emph{Random coding:} For indices $w=1,2,\dotsc,|\setc|$, we generate code words $C^n(w)$ with the $n|\setc|$ entries independent and uniformly distributed on $\setx$. The code is
    \begin{align}
        \setc=\{C^n(1),C^n(2),\dotsc,C^n(|\setc|)\}.
    \end{align}
    \item The \emph{code rate} is $\rcode=\frac{\log_2(|\setc|)}{n}$ and equivalently, we have $|\setc|=2^{n\rcode}$.
    \item \emph{Encoding:} We set $\rtrans+R'=\rcode$ and double index the code words by $C^n(u,v)$, $u=1,2,\dotsc,2^{n\rtrans}$, $v=1,2,\dotsc,2^{nR'}$. We encode message $u\in\{1,\dotsc,2^{n\rtrans}\}$ by looking for a $v$, so that $C^n(u,v)\in\sett^n_\epsilon(P_X)$. If we can find such $v$, we transmit the corresponding code word. If not, we choose some arbitrary $v$ and transmit the corresponding code word.
    \item The \emph{transmission rate} is $\rtrans$, since the encoder can encode $2^{n\rtrans}$ different messages.
    \item \emph{Decoding:} We consider a non-negative metric $q$ on $\setx\times\sety$ and we define
    \begin{align}
        q^n(x^n,y^n):=\prod_{i=1}^n q(x_i,y_i),\quad  x^n\in\setx^n,y^n\in\sety^n.\label{eq:airestim:metric}
    \end{align}
    For the channel output $y^n$, we let the receiver decode with the rule
    \begin{align}
        \hat{W}=\argmax_{w\in\{1,\dotsc,2^{n\rcode}\}}\prod_{i=1}^n q(C_i(w),y_i).
    \end{align}
    Note that the decoder evaluates the metric on all code words in $\setc$, which includes code words that will never be transmitted because they are not in the shaping set $\sett^n_\epsilon(P_X)$.
    \item \emph{Decoding error:} We consider the error probability
    \begin{align}
      P_e=\Pr(\hat{W}\neq W)
    \end{align}
    where $W$ is the index of the transmitted code word and $\hat{W}$ is the detected index at the receiver. Note that $\hat{W}=W$ implies $\hat{U}=U$, where $U$ is the encoded message and where $\hat{U}$ is the detected message. In particular, we have $\Pr(\hat{U}\neq U)\leq P_e$.
\end{itemize}
\begin{remark}\label{rem:classical}
  The classical transceiver setup analyzed in, e.g., \cite[Chapter~5~\&~7]{gallager1968information},\cite{kaplan1993information},\cite{ganti2000mismatched}, is as follows:
  \begin{itemize}
  \item \emph{Random coding:} For the code $\tilde{\setc}=\{\tilde{C}^n(1),\dotsc,\tilde{C}^n(2^{n\tildercode})\}$, the $n\cdot 2^{n\tildercode}$ code word entries are generated independently according to the distribution $P_X$.
  \item \emph{Encoding:} Message $u$ is mapped to code word $\tilde{C}^n(u)$.
  \item The decoder uses the decoding rule
  \begin{align}
  \hat{u}=\argmax_{u\in\{1,2,\dotsc,2^{n\tildercode}\}}\prod_{i=1}^n q(\tilde{C}_i(u),y_i).
  \end{align}
  \end{itemize}
  Note that in difference to layered PS, the code word index is equal to the message, i.e., $w=u$, and consequently, the transmission rate is equal to the code rate, i.e., $\rtrans=\tildercode$, while for layered PS, we have $\rtrans<\rcode$.
\end{remark}
\begin{remark}\label{rem:uniform}
  In case the input distribution $P_X$ is uniform, layered PS is equivalent to the classical transceiver.
\end{remark}

\section{Main Results}
\label{sec:results}
\subsection{Achievable Encoding Rate}

\begin{proposition}\label{prop:encoding rate}
Layered PS encoding is successful with high probability for large $n$ if
\begin{align}
\rtrans < [\rcode-\idop(P_X\Vert P_U)]^+.\label{eq:encoding rate}
\end{align}
\end{proposition}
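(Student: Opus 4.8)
The plan is a standard random-coding covering argument. Fix a message $u\in\{1,\dots,2^{n\rtrans}\}$. By the construction of the code, the $2^{nR'}$ candidate code words $C^n(u,1),\dots,C^n(u,2^{nR'})$ are independent, each uniformly distributed on $\setx^n$, and encoding of $u$ fails precisely when none of them lies in the shaping set $\sett^n_\epsilon(P_X)$. Writing $R'=\rcode-\rtrans$ and $p_n:=\Pr\big(U^n\in\sett^n_\epsilon(P_X)\big)$ for $U^n$ uniform on $\setx^n$, the proposition therefore reduces to showing that the union bound $2^{n\rtrans}(1-p_n)^{2^{nR'}}$ over the messages tends to $0$.

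The quantitative heart of the argument is the lower bound $p_n\ge \tfrac12\,2^{-n[\idop(P_X\Vert P_U)+\epsilon\entop(X)]}$ for large $n$. I would get it as follows. For $x^n\in\sett^n_\epsilon(P_X)$, \eqref{eq:def:typ} gives $-\log_2 P_X^n(x^n)=\sum_{a\in\supp P_X}N(a|x^n)\,(-\log_2 P_X(a))\le (1+\epsilon)\,n\,\entop(X)$, so every letter-typical sequence has $P_X^n$-probability at most $2^{-n(1-\epsilon)\entop(X)}$; wait, I should use the matching lower-probability bound only if needed, but the upper bound on the probability is what yields a lower bound on the cardinality. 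Since by the weak law of large numbers $N(a|X^n)/n\pconv P_X(a)$ for every $a\in\setx$, we have $\Pr(X^n\in\sett^n_\epsilon(P_X))\to 1$, hence for $n$ large $|\sett^n_\epsilon(P_X)|\ge \tfrac12\,2^{n(1-\epsilon)\entop(X)}$. Dividing by $|\setx|^n$ and invoking \eqref{eq:uniform divergence} gives $p_n=|\sett^n_\epsilon(P_X)|/|\setx|^n\ge \tfrac12\,2^{-n[\idop(P_X\Vert P_U)+\epsilon\entop(X)]}$. (Alternatively one may simply quote the corresponding cardinality bounds for letter-typical sets from \cite[Section~1.3]{kramer2007topics}.)

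The rest is bookkeeping. Using $1-t\le e^{-t}$, the probability that encoding fails for a fixed $u$ is at most $\exp(-p_n 2^{nR'})\le \exp\!\big(-\tfrac12\,2^{n[R'-\idop(P_X\Vert P_U)-\epsilon\entop(X)]}\big)$, and the union bound over the $2^{n\rtrans}$ messages bounds the overall failure probability by $2^{n\rtrans}\exp\!\big(-\tfrac12\,2^{n[R'-\idop(P_X\Vert P_U)-\epsilon\entop(X)]}\big)$. If $\rcode>\idop(P_X\Vert P_U)$, then the hypothesis $\rtrans<\rcode-\idop(P_X\Vert P_U)$ yields $R'=\rcode-\rtrans>\idop(P_X\Vert P_U)$, so one can fix $\epsilon>0$ small enough that $\gamma:=R'-\idop(P_X\Vert P_U)-\epsilon\entop(X)>0$, and the bound becomes $2^{n\rtrans}\exp(-\tfrac12\,2^{n\gamma})\to 0$, the doubly exponential decay dominating the exponential prefactor. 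If instead $\rcode\le\idop(P_X\Vert P_U)$, then $[\rcode-\idop(P_X\Vert P_U)]^+=0$ and, since necessarily $\rtrans\ge 0$, the hypothesis \eqref{eq:encoding rate} is vacuous.

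The only genuine obstacle is the lower bound on $p_n$ — making rigorous that a uniformly drawn length-$n$ sequence is $P_X$-letter-typical with probability essentially $2^{-n\idop(P_X\Vert P_U)}$ — while everything else is the routine "cover an exponentially large set with slightly more than enough independent draws" estimate. One should also note that the $o_\epsilon(1)$ slack introduced by $\epsilon\entop(X)$ is exactly what is absorbed by the strict inequality in \eqref{eq:encoding rate}, which is why the statement is phrased with a strict "$<$".
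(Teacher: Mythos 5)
Your proof is correct and takes essentially the same route as the paper's: a covering bound $(1-p_n)^{2^{nR'}}\le\exp(-p_n 2^{nR'})$ via $1-t\le e^{-t}$, combined with the lower bound $|\sett^n_\epsilon(P_X)|\gtrsim 2^{n(1-\epsilon)\entop(X)}$ on the letter-typical set (which the paper simply quotes as a lemma rather than rederiving from the weak law of large numbers), followed by the $\rtrans\ge 0$ and $[\cdot]^+$ bookkeeping; your union bound over all $2^{n\rtrans}$ messages and your explicit tracking of the $\epsilon\entop(X)$ slack are harmless (indeed slightly more careful) refinements. One small slip to fix: the displayed inequality $-\log_2 P_X^n(x^n)\le(1+\epsilon)\,n\,\entop(X)$ yields a \emph{lower} bound on $P_X^n(x^n)$, not the upper bound you need; the correct step uses the left half of \eqref{eq:def:typ}, $N(a|x^n)\ge(1-\epsilon)nP_X(a)$, giving $P_X^n(x^n)\le 2^{-n(1-\epsilon)\entop(X)}$ and hence the cardinality bound — which is exactly what your own parenthetical self-correction points toward.
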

\begin{proof}
See Section~\ref{subsec:encoding proof}.
\end{proof}
If the right-hand side of \eqref{eq:encoding rate} is positive, this condition means the following: out of the $2^{n\rcode}$ code words, approximately $2^{n[\rcode-\idop(P_X\Vert P_U)]}$ have approximately the distribution $P_X$ and may be selected by the encoder for transmission. If the code rate is less than the informational divergence, then very likely, the code does not contain any code word with approximately the distribution $P_X$. In this case, encoding is impossible, which corresponds to the encoding rate zero. The plus operator $[\cdot]^+=\max\{0,\cdot\}$ ensures that this is reflected by the expression on the right-hand side of \eqref{eq:encoding rate}.

\subsection{Achievable Decoding Rate}

\begin{proposition}
Suppose code word $C^n(w)=x^n$ is transmitted and let $y^n$ be a channel output sequence. With high probability for large $n$, the layered PS decoder can recover the index $w$ from the sequence $y^n$ if
\begin{align}
\rcode<\hattcode(x^n,y^n,q)=\log_2|\setx|-\frac{1}{n}\sum_{i=1}^n\left[-\log_2\frac{q(x_i,y_i)}{\sum_{a\in\setx}q(a,y_i)}\right]\label{eq:achievable decoding rate}
\end{align}
that is, $\hattcode(x^n,y^n,q)$ is an achievable code rate.
\end{proposition}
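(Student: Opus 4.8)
The plan is to run the random-coding union bound underlying Gallager's threshold-decoding argument. I condition throughout on the event of the proposition, namely that index $w$ is transmitted with $C^n(w)=x^n$ and that the channel output is $Y^n=y^n$ (so in particular $W=w$), so that $x^n$ and $y^n$ are fixed deterministic sequences and the only remaining randomness is in the competing code words $C^n(w')$, $w'\neq w$. It suffices to show that, averaged over the code ensemble,
\begin{align}
\Pr\!\left(\hat W\neq W \mid C^n(w)=x^n,\,Y^n=y^n\right)\;\le\;2^{\,n\left(\rcode-\hattcode(x^n,y^n,q)\right)},\label{eq:plan:target}
\end{align}
which vanishes as $n\to\infty$ whenever $\rcode<\hattcode(x^n,y^n,q)$. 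Note that if $q^n(x^n,y^n)=0$ then one summand in \eqref{eq:achievable decoding rate} equals $+\infty$, hence $\hattcode(x^n,y^n,q)=-\infty$ and the hypothesis $\rcode<\hattcode(x^n,y^n,q)$ is vacuous; so we may assume $q(x_i,y_i)>0$, and a fortiori $\sum_{a\in\setx}q(a,y_i)>0$, for every $i$.

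First I would reduce the error event: under a worst-case tie-break, $\hat W\neq W$ forces $q^n(C^n(w'),y^n)\geq q^n(x^n,y^n)$ for some $w'\neq w$, so a union bound over the at most $2^{n\rcode}$ competing indices gives
\begin{align}
\Pr\!\left(\hat W\neq W \mid C^n(w)=x^n,\,Y^n=y^n\right)\;\le\;2^{n\rcode}\,\Pr\!\left(q^n(\tilde C^n,y^n)\geq q^n(x^n,y^n)\right),
\end{align}
with $\tilde C^n$ a generic competing code word. The point that needs care here is that $\tilde C^n$ may be treated as having i.i.d.\ entries, uniform on $\setx$, and independent of $(x^n,y^n)$. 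For code words in a message block other than that of $w$ this is immediate, since those are generated independently of $C^n(w)$ and hence of $Y^n$. For the code words in the same block as $w$ that the encoder passed over, their entries are uniform conditioned on the sequence lying \emph{outside} $\sett^n_\epsilon(P_X)$; but, exactly as in the proof of Proposition~\ref{prop:encoding rate}, a uniformly drawn length-$n$ sequence lies in $\sett^n_\epsilon(P_X)$ only with probability of order $2^{-n\idop(P_X\Vert P_U)}$, so this conditioning inflates the relevant probabilities by at most $\left(1-2^{-n\idop(P_X\Vert P_U)}\right)^{-1}\to 1$ and is absorbed into the $n\to\infty$ limit.

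Then I would apply the elementary bound $\mathbbm{1}\{A\geq B\}\leq A/B$, valid for $A\geq 0$ and $B>0$ (the $s=1$ instance of the Chernoff step in Gallager's exponent), with $A=q^n(\tilde C^n,y^n)$ and $B=q^n(x^n,y^n)>0$. Taking the expectation over $\tilde C^n$, using independence across coordinates and $\expop\!\left[q(\tilde C_i,y_i)\right]=\frac{1}{|\setx|}\sum_{a\in\setx}q(a,y_i)$, yields
\begin{align}
\Pr\!\left(q^n(\tilde C^n,y^n)\geq q^n(x^n,y^n)\right)\;\leq\;\prod_{i=1}^{n}\frac{\frac{1}{|\setx|}\sum_{a\in\setx}q(a,y_i)}{q(x_i,y_i)}.
\end{align}
Substituting into the union bound and taking $\log_2$, the exponent becomes $n\rcode-n\log_2|\setx|+\sum_{i=1}^{n}\log_2\frac{\sum_{a\in\setx}q(a,y_i)}{q(x_i,y_i)}=n\left(\rcode-\hattcode(x^n,y^n,q)\right)$, which is exactly \eqref{eq:plan:target}.

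I expect the main obstacle to be the second paragraph above: making rigorous that the encoder's typicality-based selection of the transmitted code word does not perturb the i.i.d.-uniform behaviour of the competing code words on which the union bound rests. The remaining steps are routine; the particular virtue of the choice $s=1$ in the Chernoff step is precisely that it makes the resulting exponent equal to $\rcode-\hattcode(x^n,y^n,q)$ for the stated $\hattcode$ (a general $s>0$ would give an alternative, here unneeded, valid bound).
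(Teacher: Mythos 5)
Your proposal is correct and follows essentially the same route as the paper: a union/Markov bound over the competing code words, the $s=1$ Chernoff step $\mathbbm{1}\{A\geq B\}\leq A/B$, and the per-coordinate expectation $\expop[q(\tilde C_i,y_i)]=\frac{1}{|\setx|}\sum_{a\in\setx}q(a,y_i)$ yielding the exponent $\rcode-\hattcode(x^n,y^n,q)$, which is exactly the chain \eqref{eq:airestim:markov}--\eqref{airestim:eq:code rate estimate} in the paper. Your second paragraph even treats a point the paper passes over silently (the mild dependence between the passed-over code words in the transmitted message's block and the event $\{C^n(w)=x^n\}$ induced by the encoder's typicality-based selection), and your inflation-factor argument for it is sound whenever $\idop(P_X\Vert P_U)>0$, the uniform case being covered by Remark~\ref{rem:uniform}.
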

\begin{proof}
  See Section~\ref{subsec:decoding proof}.
\end{proof}
\begin{proposition}
For a memoryless channel with channel law
\begin{align}
p_{Y^n|X^n}(b^n|a^n)=\prod_{i=1}^n p_{Y|X}(b_i|a_i),\quad b^n\in\sety^n, a^n\in\setx^n
\end{align}
the layered PS decoder can recover sequence $x^n$ from the random channel output if the sequence is approximately of type $P_X$ and if
\begin{align}
\rcode<\tcode=\log_2|\setx|-\expop\left[-\log_2\frac{q(X,Y)}{\sum_{a\in\setx}q(a,Y)}\right]\label{eq:memoryless achievable decoding rate}
\end{align}
where the expectation is taken according to $XY\sim P_Xp_{Y|X}$.
\end{proposition}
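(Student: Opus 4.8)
The plan is to derive this proposition from the preceding one, which already establishes that $\hattcode(x^n,y^n,q)$ is an achievable code rate for each fixed pair $(x^n,y^n)$, by invoking a law of large numbers to show that $\hattcode(x^n,Y^n,q)$ concentrates around $\tcode$ when $x^n$ is letter-typical and $Y^n$ is the random channel output. To organize the calculation I would first abbreviate $g(a,b)=-\log_2\frac{q(a,b)}{\sum_{a'\in\setx}q(a',b)}$, which is non-negative since $q(a,b)\le\sum_{a'}q(a',b)$, so that
\begin{align}
\hattcode(x^n,y^n,q)=\log_2|\setx|-\frac1n\sum_{i=1}^n g(x_i,y_i),\qquad \tcode=\log_2|\setx|-\expop[g(X,Y)].
\end{align}
It therefore suffices to show that $\frac1n\sum_{i=1}^n g(x_i,Y_i)\pconv\expop[g(X,Y)]$ as $n\to\infty$, where $x^n\in\sett^n_\epsilon(P_X)$ with $\epsilon$ small and $Y_1,\dots,Y_n$ are independent with $Y_i\sim p_{Y|X}(\cdot|x_i)$. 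If $\expop[g(X,Y)]=+\infty$ then $\tcode=-\infty$ and the claim is vacuous, so I may assume $\expop[g(X,Y)]<\infty$.

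The key step is to split the empirical average by input letter,
\begin{align}
\frac1n\sum_{i=1}^n g(x_i,Y_i)=\sum_{a\in\supp P_X}\frac{N(a|x^n)}{n}\cdot\frac{1}{N(a|x^n)}\sum_{i\colon x_i=a}g(a,Y_i),
\end{align}
and to treat each inner average separately. For $a\in\supp P_X$, letter-typicality forces $N(a|x^n)\to\infty$, the summands $g(a,Y_i)$ with $Y_i\sim p_{Y|X}(\cdot|a)$ are i.i.d.\ with finite mean $\expop[g(a,Y)|X=a]$ (finite because the non-negative terms sum to $\expop[g(X,Y)]<\infty$), and hence by the weak law of large numbers the inner average converges in probability to $\expop[g(a,Y)|X=a]$; meanwhile $\frac{N(a|x^n)}{n}\in[(1-\epsilon)P_X(a),(1+\epsilon)P_X(a)]$. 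Since $\setx$ is finite, combining the finitely many terms yields $\frac1n\sum_{i=1}^n g(x_i,Y_i)\pconv\sum_{a}P_X(a)\expop[g(a,Y)|X=a]=\expop[g(X,Y)]$, so $\hattcode(x^n,Y^n,q)\pconv\tcode$. Consequently, whenever $\rcode<\tcode$ we have $\rcode<\hattcode(x^n,Y^n,q)$ with probability approaching one, and the preceding proposition then guarantees that the decoder recovers the index $w$ of the transmitted code word (equivalently, the sequence $x^n=C^n(w)$) with high probability.

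The main obstacle I anticipate is bookkeeping rather than conceptual: making the informal phrase ``approximately of type $P_X$'' precise by choosing $\epsilon=\epsilon(n)\downarrow 0$ slowly enough that the typicality slack and the law-of-large-numbers fluctuations both vanish, and chaining the present ``high probability'' statement with the ``high probability'' statement of the preceding proposition so that the overall error probability still tends to zero. Everything else — the finiteness of $\setx$, the non-negativity of $g$, and the reduction to the memoryless i.i.d.\ structure via $p_{Y^n|X^n}=\prod_i p_{Y|X}$ — is routine once this is set up.
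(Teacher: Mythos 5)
Your proposal is correct and follows essentially the same route as the paper: decompose the empirical average by input letter, apply the weak law of large numbers to the i.i.d.\ outputs $\{Y_i\colon x_i=a\}$ for each $a$, and use letter-typicality of $x^n$ to control the weights $N(a|x^n)/n$. The only cosmetic difference is that the paper keeps a fixed $\epsilon$ and states the resulting achievable code rate with an explicit $\epsilon$-penalty term, whereas you absorb this slack by letting $\epsilon\downarrow 0$ (and you add the harmless observations that $g\geq 0$ and that the conditional means are finite), which changes nothing essential.
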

\begin{proof}
  See Section~\ref{subsec:memoryless decoding proof}.
\end{proof}
The term 
\begin{align}
\expop\left[-\log_2\frac{q(X,Y)}{\sum_{a\in\setx}q(a,Y)}\right]\label{eq:unceratinty}
\end{align}
in \eqref{eq:memoryless achievable decoding rate} and its empirical version in \eqref{eq:achievable decoding rate} play a central role in achievable rate calculations. We call \eqref{eq:unceratinty} \emph{uncertainty}. Note that for each realization $b$ of $Y$, $Q_{X|Y}(\cdot|b):=q(\cdot,b)/\sum_{a\in\setx}q(a,b)$ is a distribution on $\setx$ so that 
\begin{align}
\expop\left[\left.-\log_2\frac{q(X,Y)}{\sum_{a\in\setx}q(a,Y)}\right|Y=b\right]&=\sum_{a\in\setx}P_{X|Y}(a|b)\log_2\left[-Q_{X|Y}(a|b)\right]\\
&=\xop\left(P_{X|Y}(\cdot|b)\Vert Q_{X|Y}(\cdot|b)\right)
\end{align}
is the cross-entropy of $P_{X|Y}(\cdot|b)$ and $Q_{X|Y}(\cdot|b)$. Thus, the uncertainty in \eqref{eq:unceratinty} is a conditional cross-entropy of the probabilistic model $Q_{X|Y}$ assumed by the decoder via its decoding metric $q$, and the actual distribution $P_Xp_{Y|X}$.
\begin{example}[Achievable Binary Code (ABC) Rate]
  For bit-metric decoding (BMD), which we discuss in detail in Section~\ref{subsec:bmd}, the input is a binary label $\vecB=B_1B_2\dotsc B_m$ and the optimal bit-metric is
  \begin{align}
    q(\veca,y)=\prod_{j=1}^m P_{B_j}(a_j)p_{Y|B_j}(y|a_j)
  \end{align}
  and the binary code rate is $\rbcode=\rcode/m$. The achievable binary code (ABC) rate is then
  \begin{align}
    \tbcode=\frac{\tcode}{m}=1-\frac{1}{m}\sum_{j=1}^m\entop(B_j|Y)
  \end{align}
  where we used $\log_2|\setx|=m$. We remark that ABC rates were used implicitly in \cite[Remark~6]{bocherer2015bandwidth} and \cite[Eq.~(23)]{steiner2016protograph} for the design of binary low-density parity-check (LDPC) codes.
\end{example}

\subsection{Achievable Transmission Rate}

By replacing the code rate $\rcode$ in the achievable encoding rate $[\rcode-\idop(P_X\Vert P_U)]^+$ by the achievable decoding rate $\tcode$, we arrive at an achievable transmission rate.
\begin{proposition}
  An achievable transmission rate is
\begin{align}
\ttrans = [\tcode-\idop(P_X\Vert P_U)]^+&= \left[\log_2|\setx|-\expop\left[-\log_2\frac{q(X,Y)}{\sum_{a\in\setx}q(a,Y)}\right]-\idop(P_X\Vert P_U)\right]^+\nonumber\\
&=\left[\entop(X)-\expop\left[-\log_2\frac{q(X,Y)}{\sum_{a\in\setx}q(a,Y)}\right]\right]^+\label{eq:uncertainty perspective}\\
&=\left[\expop\left[\log_2\frac{q(X,Y)}{\sum_{a\in\setx}\frac{1}{|\setx|}q(a,Y)}\right]-\idop(P_X\Vert P_U)\right]^+\label{eq:large code perspective}\\
&=\left[\expop\left[\log_2\frac{q(X,Y)\frac{1}{P_X(X)}}{\sum_{a\in\setx}q(a,Y)}\right]\right]^+\label{eq:output perspective}.
\end{align}
\end{proposition}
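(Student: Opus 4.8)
The plan is to combine the two preceding propositions for achievability and then to check that the four displayed right-hand sides coincide by elementary manipulations. For the achievability part I would first dispose of the degenerate case: if $\tcode-\idop(P_X\Vert P_U)\le 0$ then $\ttrans=0$, which is trivially achievable since the encoder can always transmit a fixed code word. So assume $\tcode-\idop(P_X\Vert P_U)>0$ and fix $\delta>0$ with $\delta<\tfrac12\bigl(\tcode-\idop(P_X\Vert P_U)\bigr)$. Choose the code rate $\rcode=\tcode-\delta$ and the transmission rate $\rtrans=\rcode-\idop(P_X\Vert P_U)-\delta=\tcode-\idop(P_X\Vert P_U)-2\delta$, which is nonnegative by the choice of $\delta$. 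Since $\rtrans<[\rcode-\idop(P_X\Vert P_U)]^+$, Proposition~\ref{prop:encoding rate} gives that, with probability tending to $1$ in $n$, the encoding succeeds, i.e.\ every message $u$ is mapped to a code word in $\sett^n_\epsilon(P_X)$. On that event the transmitted code word has type approximately $P_X$, so since $\rcode<\tcode$, the preceding proposition on the memoryless achievable decoding rate, cf.\ \eqref{eq:memoryless achievable decoding rate}, gives that the decoder recovers the code word index $w$ — and hence the message $u$, since $\hat W=W$ implies $\hat U=U$ — with probability tending to $1$. A union bound over the encoding-failure event and the decoding-failure event yields $P_e\to 0$, so $\rtrans$ is achievable; letting $\delta\downarrow 0$ shows every rate below $[\tcode-\idop(P_X\Vert P_U)]^+$ is achievable, which is the first equality.

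For the equivalent forms, abbreviate the uncertainty by $\mathsf{U}:=\expop\bigl[-\log_2\frac{q(X,Y)}{\sum_{a\in\setx}q(a,Y)}\bigr]$, so that $\tcode=\log_2|\setx|-\mathsf{U}$ by \eqref{eq:memoryless achievable decoding rate}; substituting this into $[\tcode-\idop(P_X\Vert P_U)]^+$ gives the first line. Substituting instead \eqref{eq:uniform divergence}, $\idop(P_X\Vert P_U)=\log_2|\setx|-\entop(X)$, gives $\tcode-\idop(P_X\Vert P_U)=\entop(X)-\mathsf{U}$, which is \eqref{eq:uncertainty perspective}. For \eqref{eq:large code perspective}, note $\sum_{a\in\setx}\tfrac1{|\setx|}q(a,Y)=\tfrac1{|\setx|}\sum_{a\in\setx}q(a,Y)$, so that $\log_2\frac{q(X,Y)}{\sum_{a}\frac1{|\setx|}q(a,Y)}=\log_2|\setx|+\log_2\frac{q(X,Y)}{\sum_{a}q(a,Y)}$; taking expectations turns that bracket into $\log_2|\setx|-\mathsf{U}=\tcode$, recovering the first line. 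Finally, for \eqref{eq:output perspective}, split $\log_2\frac{q(X,Y)/P_X(X)}{\sum_{a}q(a,Y)}=\log_2\frac{q(X,Y)}{\sum_{a}q(a,Y)}-\log_2 P_X(X)$ and take expectations, obtaining $-\mathsf{U}+\entop(X)$, which is the bracket of \eqref{eq:uncertainty perspective}. The outer $[\cdot]^+$ is common to all four expressions, completing the chain of equalities.

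The algebra is routine; the only steps that need care are in the achievability part, namely that the decoding guarantee is invoked only on the high-probability event that encoding has produced a word of type $\approx P_X$, and that the $[\cdot]^+$ operator and the limit $\delta\downarrow 0$ interact correctly near zero rate — both handled by the case split above. One should also keep the usual regularity assumptions on the metric (the relevant expectations are finite and $\sum_{a\in\setx}q(a,y)>0$ on the support) so that $\mathsf{U}$ and all displayed quantities are well defined.
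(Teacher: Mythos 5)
Your proposal is correct and follows essentially the same route as the paper, which justifies this proposition only by the one-line remark that one replaces $\rcode$ in the achievable encoding rate $[\rcode-\idop(P_X\Vert P_U)]^+$ by the achievable decoding rate $\tcode$ from \eqref{eq:memoryless achievable decoding rate}, with the displayed equalities following from $\idop(P_X\Vert P_U)=\log_2|\setx|-\entop(X)$ and elementary manipulations of the logarithm. Your added bookkeeping (the $\delta$-slack between $\rtrans$, $\rcode$, and $\tcode$, the union bound over encoding and decoding failures, and the case split for the $[\cdot]^+$) simply makes explicit what the paper leaves implicit, and the algebraic verification of \eqref{eq:uncertainty perspective}--\eqref{eq:output perspective} is accurate.
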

The right-hand sides provide three different perspectives on the achievable transmission rate.
\begin{itemize}
\item \emph{Uncertainty perspective:} In \eqref{eq:uncertainty perspective}, $q(\cdot,b)/\sum_{a\in\setx}q(a,b)$ defines for each realization $b$ of $Y$ a distribution on $\setx$ and plays the role of a posterior probability distribution that the receiver assumes about the input, given its output observation. The expectation corresponds to the uncertainty that the receiver has about the input, given the output.
\item \emph{Divergence perspective:} The term in \eqref{eq:large code perspective} emphasizes that the random code was generated according to a uniform distribution and that of the $2^{n\tcode}$ code words, only approximately $ 2^{n\tcode}/2^{n\idop(P_X\Vert P_U)}$ code words are actually used for transmission, because the other code words very likely do not have distributions that are approximately $P_X$.
\item \emph{Output perspective:} In \eqref{eq:output perspective}, $q(a,\cdot)/P_X(a)$ has the role of a channel likelihood given input $X=a$ assumed by the receiver, and correspondingly, $\sum_{a\in\setx}q(a,\cdot)$ plays the role of a channel output statistics assumed by the receiver.
\end{itemize}

\section{Metric Design: Examples}
\label{sec:metric design}

By the information inequality \eqref{eq:information inequality}, we know that
\begin{align}
  \expop[-\log_2 P_Z(X)]\geq \expop[-\log_2 P_X(X)]=\entop(X)\label{eq:entropy inequality}
\end{align}
with equality if and only if $P_Z=P_X$. We now use this observation to choose  optimal metrics.

\subsection{Mutual Information}

Suppose we have no restriction on the decoding metric $q$. To maximize the achievable rate, we need to minimize the uncertainty in \eqref{eq:uncertainty perspective}. We have
\begin{align}
\expop\left[-\log_2\frac{q(X,Y)}{\sum_{a\in\setx}q(a,Y)}\right]&=\expop\left[\expop\left[\left.-\log_2\frac{q(X,Y)}{\sum_{a\in\setx}q(a,Y)}\right\vert Y\right]\right]\\
&\ogeq{\ref{eq:entropy inequality}} \expop\left[\expop\left[\left.-\log_2P_{X|Y}(X|Y)\right\vert Y\right]\right]\\
&=\entop(X|Y)
\end{align}
with equality if we use the posterior probability distribution as metric, i.e.,
\begin{align}
q(a,b)=P_{X|Y}(a|b),\quad a\in\setx,b\in\sety.
\end{align}
Note that this choice of $q$ is not unique, in particular, $q(a,b)=P_{X|Y}(a|b)P_Y(b)$ is also optimal, since the factor $P_Y(b)$ cancels out. For the optimal metric, the achievable rate is
\begin{align}
\ttrans^\textsf{opt}=[\entop(X)-\entop(X|Y)]^+=\miop(X;Y)
\end{align}
where we dropped the $(\cdot)^+$ operator because by the information inequality, mutual information is non-negative.
\subsubsection*{Discussion}
In \cite[Chapter~5 ~\&~7]{gallager1968information}, the achievability of mutual information is shown using the classical transceiver of Remark~\ref{rem:classical} with the likelihood decoding metric $q(a,b)=p_{Y|X}(b|a)$, $a\in\setx,b\in\sety$. Comparing the classical transceiver with layered PS for a common rate $\rtrans$, we have
\begin{align}
  \text{classical transceiver: }&\hat{w}=\argmax_{w\in\{1,2,\dotsc,2^{n\rtrans}\}}\prod_{i=1}^np_{Y|X}(y_i|\tilde{c}_i(w))\label{eq:classical decoder}\\
  \text{layered PS: }&\hat{w}=\argmax_{w\in\{1,2,\dotsc,2^{n[\rtrans+\idop(P_X\Vert P_U)]}\}}\prod_{i=1}^nP _{X|Y}(c_i(w)|y_i)\nonumber\\
  &=\argmax_{w\in\{1,2,\dotsc,2^{n[\rtrans+\idop(P_X\Vert P_U)]}\}}\prod_{i=1}^n p_{Y|X}(y_i|c_i(w))P_X(c_i(w))\label{eq:ps decoder}
\end{align}
Comparing \eqref{eq:classical decoder} and \eqref{eq:ps decoder} suggests the following interpretation:
\begin{itemize}
  \item The classical transceiver uses the prior information by evaluating the \textbf{likelihood density} $p_{Y|X}$ on the code $\tilde{\setc}$ that contains code words with distribution $P_X$. The code $\tilde{\setc}$ has size $|\tilde{\setc}|=2^{n\rtrans}$.
  \item Layered PS uses the prior information by evaluating the \textbf{posterior distribution} on all code words in the `large' code $\setc$ that contains mainly code words that do not have distribution $P_X$. The code $\setc$ has size $|\setc|=2^{n[\rtrans+\idop(P_X\Vert P_U)]}$.
\end{itemize}
\begin{remark}
  The code $\tilde{\setc}$ of the classical transceiver is in general non-linear, since the set of vectors with distribution $P_X$ is non-linear. It can be shown that all the presented results for layered PS also apply when $\setc$ is a random linear code. In this case, layered PS evaluates a metric on a linear set while the classical transceiver evaluates a metric on a non-linear set.
\end{remark}

\subsection{Bit-Metric Decoding}
\label{subsec:bmd}
Suppose the channel input is a binary vector $\vecB=B_1\dotsb B_m$ and the receiver uses a bit-metric, i.e.,
\begin{align}
q(\veca,y)=\prod_{j=1}^m q_j(a_j,y).
\end{align}
In this case, we have for the uncertainty in \eqref{eq:uncertainty perspective}
\begin{align}
\expop\left[-\log_2\frac{q(\vecB,Y)}{\sum_{\veca\in\{0,1\}^m}q(\veca,Y)}\right]
&=\expop\left[-\log_2\frac{\prod_{j=1}^m q_j(B_j,Y)}{\sum_{\veca\in\{0,1\}^m}\prod_{j=1}^m q_j(a_j,Y)}\right]\\
&=\expop\left[-\log_2\frac{\prod_{j=1}^m q_j(B_j,Y)}{\prod_{j=1}^m \sum_{a\in\{0,1\}}q_j(a,Y)}\right]\\
&=\expop\left[-\sum_{j=1}^m\log_2\frac{q_j(B_j,Y)}{\sum_{a\in\{0,1\}}q_j(a,Y)}\right]\\
&=\sum_{j=1}^m\expop\left[-\log_2\frac{q_j(B_j,Y)}{\sum_{a\in\{0,1\}}q_j(a,Y)}\right].\label{eq:uc bmd}
\end{align}
For each $j=1,\dotsc,m$, we now have
\begin{align}
\expop\left[-\log_2\frac{q_j(B_j,Y)}{\sum_{a\in\{0,1\}}q_j(a,Y)}\right]&=\expop\left[\expop\left[\left.-\log_2\frac{q_j(B_j,Y)}{\sum_{a\in\{0,1\}}q_j(a,Y)}\right\vert Y\right]\right]\\
&\geq \entop(B_j|Y)
\end{align}
with equality if
\begin{align}
q_j(a,b)=P_{B_j|Y}(a|b),\quad a\in\{0,1\},b\in\sety.
\end{align}
The achievable rate becomes the bit-metric decoding (BMD) rate
\begin{align}
\ttrans^\textsf{bmd}=\left[\entop(\vecB)-\sum_{j=1}^m\entop(B_j|Y)\right]^+\label{eq:rbmd}
\end{align}
which we first stated in \cite{bocherer2014probabilistic} and discuss in detail in \cite[Section~VI.]{bocherer2015bandwidth}. In \cite{bocherer2016achievable}, we prove the achievability of \eqref{eq:rbmd} for discrete memoryless channels. For independent bit-level $B_1,B_2,\dotsc,B_m$, the BMD rate can be also be written in the form
\begin{align}
  \ttrans^\textsf{bmd,ind}=\sum_{j=1}^m\miop(B_j;Y).
\end{align}

\subsection{Interleaved Coded Modulation}
\label{subsec:interleaved}
Suppose we have a vector channel with input $\vecX=X_1\dotsb X_m$ with distribution $P_{\vecX}$ on the input alphabet $\setx^m$ and output $\vecY=Y_1\dotsb Y_m$ with distributions $P_{\vecY|\vecX}(\cdot|\veca)$, $\veca\in\setx^m$, on the output alphabet $\sety^m$. We consider the following situation:
\begin{itemize}
\item The $Y_i$ are potentially correlated, in particular, we may have $Y_1=Y_2=\dotsb=Y_m$.
\item Despite the potential correlation, the receiver uses a memoryless metric $q$ defined on $\setx\times\sety$, i.e., a vector input $\vecx$ and a vector output $\vecy$ are scored by
\begin{align}
q^m(\vecx,\vecy)=\prod_{i=1}^m q(x_i,y_i).
\end{align}
The reason for this decoding strategy may be an interleaver between encoder output and channel input that is reverted at the receiver but not known to the decoder. We therefore call this scenario \emph{interleaved coded modulation}.
\end{itemize}
Using the same approach as for bit-metric decoding, we have
\begin{align}
  \frac{1}{m}\expop\left[-\log_2\frac{\prod_{i=1}^m q(X_i,Y_i)}{\sum_{\veca\in\setx^m}\prod_{i=1}^m q(a_i,Y_i)}\right]
  &=  \frac{1}{m}\expop\left[-\log_2\frac{\prod_{i=1}^m q(X_i,Y_i)}{\prod_{i=1}^m\sum_{a\in\setx} q(a,Y_i)}\right]\\
  &=  \frac{1}{m}\sum_{i=1}^m \expop\left[-\log_2\frac{q(X_i,Y_i)}{\sum_{a\in\setx} q(a,Y_i)}\right].\label{eq:uc interleaved}
\end{align}
This expression is not very insightful. We could optimize $q$ for, say, the $i$th term, which would be
\begin{align}
q(a,b)=P_{X_i|Y_i}(a|b),\quad a\in\setx,b\in\sety
\end{align}
but this would not be optimal for the other terms. We therefore choose a different approach. Let $I$ be a random variable uniformly distributed on $\seti=\{1,2,\dotsc,m\}$ and define $X=X_I$, $Y=Y_I$. Then, we have
\begin{align}
\frac{1}{m}\sum_{i=1}^m \expop\left[-\log_2\frac{q(X_i,Y_i)}{\sum_{a\in\setx} q(a,Y_i)}\right]=\expop\left[-\log_2\frac{q(X_I,Y_I)}{\sum_{a\in\setx} q(a,Y_I)}\right]\\
=\expop\left[-\log_2\frac{q(X,Y)}{\sum_{a\in\setx} q(a,Y)}\right].\label{eq:uc interleaving}
\end{align}
Thus, the optimal metric for interleaving is
\begin{align}
q(a,b)=P_{X|Y}(a|b)
\end{align}
which can be calculated from
\begin{align}
P_{X}(a)p_{Y|X}(b|a)=\sum_{j=1}^m \frac{1}{m}P_{X_j}(a)p_{Y_j|X_j}(b|a).
\end{align}
The achievable rate becomes
\begin{align}
\ttrans^\textsf{icm}=\left[\entop(\vecX)-m\entop(X|Y)\right]^+.
\end{align}

  \section{Metric Assessment: Examples}
  \label{sec:metric assessment}

  Suppose a decoder is constrained to use a specific metric $q$. In this case, our task is to assess the metric performance by calculating a rate that can be achieved by using metric $q$. If $q$ is a non-negative metric, an achievable rate is our transmission rate expression
  \begin{align}
\ttrans(q)=\left[\entop(X)-\expop\left[-\log_2\frac{q(X,Y)}{\sum_{a\in\setx} q(a,Y)}\right]\right]^+.
  \end{align}
However, higher rates may also be achievable by $q$. The reason for this is as follows: suppose we have another metric $\tilde{q}$ that scores the code words in the same order as metric $q$, i.e., we have
\begin{align}
\tilde{q}(a_1,b)>\tilde{q}(a_2,b)\Leftrightarrow q(a_1,b)> q(a_2,b),\quad a_1,a_2\in\setx,b\in\sety.\label{eq:order preserving}
\end{align}
Then, $\ttrans(\tilde{q})$ is also achievable by $q$. An example for an order preserving transformation is $\tilde{q}(a,b)=e^{q(a,b)}$. For a non-negative metric $q$, another order preserving transformation is $\tilde{q}(a,b)=q(a,b)^s$ for $s>0$. We may now find a better achievable rate for metric $q$ by calculating for instance
\begin{align}
\max_{s>0}\ttrans(q^s).
\end{align}
In the following, we will say that two metrics $q$ and $\tilde{q}$ are equivalent if and only if the order-preserving condition \eqref{eq:order preserving} is fulfilled.

\subsection{Generalized Mutual Information}

Suppose the input distribution is uniform, i.e., $P_X(a)=1/|\setx|,a\in\setx$. In this case, we have
\begin{align}
\max_{s>0}\ttrans(q^s)&=\max_{s>0}\left[\expop\left[\log_2\frac{q(X,Y)^s\frac{1}{P_X(X)}}{\sum_{a\in\setx}q(a,Y)^s}\right]\right]^+\label{eq:almost gmi}\\
&=\max_{s>0}\expop\left[\log_2\frac{q(X,Y)^s}{\sum_{a\in\setx}P_X(a)q(a,Y)^s}\right]\label{eq:gmi}
\end{align}
where we used the output perspective \eqref{eq:output perspective} in \eqref{eq:almost gmi}, where we could move $P_X(a)$ under the sum in \eqref{eq:gmi}, because $P_X$ is by assumption uniform, and where we could drop the $(\cdot)^+$ operator because for $s=0$, the expectation is zero. The expression in \eqref{eq:gmi} is called generalized mutual information (GMI) in \cite{kaplan1993information} and was shown to be an achievable rate for the classical transceiver. This is in line with Remark~\ref{rem:uniform}, namely that for uniform input, layered PS is equivalent to the classical transceiver. For non-uniform input, the GMI and \eqref{eq:almost gmi} differ, i.e., we do not have equality in \eqref{eq:gmi}.
\subsubsection*{Discussion}

Suppose for a non-uniform input distribution $P_X$ and a metric $q$, the GMI evaluates to $R$, implying that a classical transceiver can achieve $R$. Can also layered PS achieve $R$, possibly by using a different metric? The answer is yes. Define
\begin{align}
  \tilde{q}(a,b)=q(a,b)P_X(a)^\frac{1}{s},\quad a\in\setx,b\in\sety\label{eq:screw metric}
\end{align}
where $s$ is the optimal value maximizing the GMI. We calculate a PS achievable rate for $\tilde{q}$ by analyzing the equivalent metric $\tilde{q}^{s}$. We have
\begin{align}
  \ttrans&=\left[\expop\left[\log_2\frac{\tilde{q}^{s}(X,Y)\frac{1}{P_X(X)}}{\sum_{a\in\setx}\tilde{q}^{s}(a,Y)}\right]\right]^+\\
&=\left[\expop\left[\log_2\frac{q^{s}(X,Y)}{\sum_{a\in\setx}P_X(a)q^{s}(a,Y)}\right]\right]^+\\
&=R
\end{align}
which shows that $R$ can also be achieved by layered PS. It is important to stress that this requires a change of the metric: for example, suppose $q$ is the Hamming metric of a hard-decision decoder (see Section~\ref{subsec:hd}). In general, this does \emph{not} imply that also $\tilde{q}$ defined by \eqref{eq:screw metric} is a Hamming metric.
\subsection{LM-Rate}

For the classical transceiver of Remark~\ref{rem:classical}, the work \cite{ganti2000mismatched} shows that the so-called LM-Rate defined as
\begin{align}
\rlm(s,r)=\left[\expop\left[\log_2\frac{q(X,Y)^s r(X)}{\sum_{a\in\supp P_X}P_X(a)q(a,Y)^s r(a)}\right]\right]^+
\end{align}
is achievable, where $s>0$ and where $r$ is a function on $\setx$. By choosing $s=1$ and $r(a)=1/P_X(a)$, we have
\begin{align}
\rlm(1,1/P_X)&=\left[\expop\left[\log_2\frac{q(X,Y)\frac{1}{P_X(X)} }{\sum_{a\in\supp P_X}q(a,Y)}\right]\right]^+\\
&\geq\left[\expop\left[\log_2\frac{q(X,Y)\frac{1}{P_X(X)} }{\sum_{a\in\setx}q(a,Y)}\right]\right]^+
\label{eq:lm equal}\\
&=\ttrans
\end{align}
with equality in \eqref{eq:lm equal} if $\supp P_X=\setx$. Thus, formally, our achievable transmission rate can be recovered from the LM-Rate. We emphasize that \cite{ganti2000mismatched} shows the achievability of the LM-Rate for the classical transceiver of Remark~\ref{rem:classical}, and consequently, $\rlm$ and $\ttrans$ have different operational meanings, corresponding to achievable rates of two different transceiver setups, with different random coding experiments, and different encoding and decoding strategies.

\subsection{Hard-Decision Decoding}
\label{subsec:hd}
\begin{figure}
  \centering
  \includegraphics{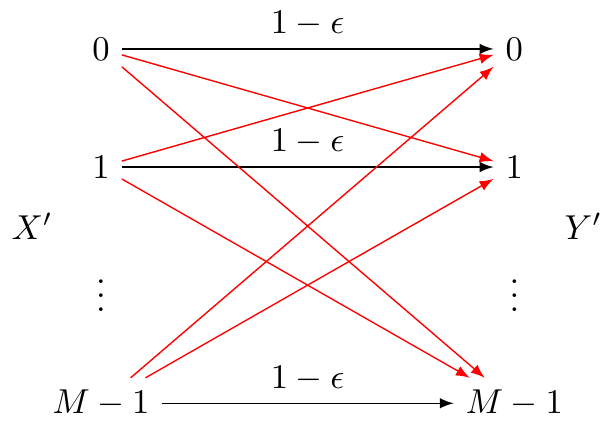}
  \caption{The $M$-ary symmetric channel. Each red transition has probability $\frac{\epsilon}{M-1}$. Note that for $M=2$, the channel is the binary symmetric channel. For uniformly distributed input $X'$, we have $\entop(X'|Y')=\entop_2(\epsilon)+\epsilon\log_2(M-1)$.}
  \label{fig:qary}
\end{figure}
Hard-decision decoding consists of two steps. First, the channel output alphabet is partitioned into disjoint decision regions
\begin{align}
  \sety=\bigcup_{a\in\setx}\sety_a,\quad \sety_a\cap\sety_b=\emptyset\text{ if }a\neq b
\end{align}
and a quantizer $\omega$ maps the channel output to the channel input alphabet according to the decision regions, i.e.,
\begin{align}
  \omega\colon \sety\to\setx,\quad\omega(b)=a\Leftrightarrow b\in\sety_a.
\end{align}
Second, the receiver uses the Hamming metric of $\setx$ for decoding, i.e.,
\begin{align}
  q(a,\omega(y))=\mathbbm{1}(a,\omega(y))=\begin{cases}
1,&\text{if }a=\omega(y)\\
0,&\text{otherwise}.
\end{cases}
\end{align}
We next derive an achievable rate by analyzing the equivalent metric $e^{s\mathbbm{1}(\cdot,\cdot)}$, $s>0$. For the uncertainty, we have
\begin{align}
  &\expop\left[-\log_2\frac{e^{s\mathbbm{1}[X,\omega(Y)]}}{\sum_{a\in\setx}e^{s\mathbbm{1}[a,\omega(Y)]}}\right]=  \expop\left[-\log_2\frac{e^{s\mathbbm{1}[X,\omega(Y)]}}{|\setx|-1+e^s}\right]\\
  &=  -\Pr[X=\omega(Y)]\log_2\frac{e^s}{|\setx|-1+e^s}-\Pr[X\neq\omega(Y)]\log_2\frac{1}{|\setx|-1+e^s}\\
  &=  -(1-\epsilon)\log_2\frac{e^s}{|\setx|-1+e^s}-\epsilon\log_2\frac{1}{|\setx|-1+e^s}\\
  &=  -(1-\epsilon)\log_2\frac{e^s}{|\setx|-1+e^s}-\sum_{\ell=1}^{|\setx|-1}\frac{\epsilon}{|\setx|-1}\log_2\frac{1}{|\setx|-1+e^s}
\end{align}
where we defined $\epsilon=\Pr(X\neq \omega(Y))$. By \eqref{eq:entropy inequality}, the last line is maximized by choosing
\begin{align}
s\colon 1-\epsilon=\frac{e^s}{|\setx|-1+e^s}\text{ and }\frac{\epsilon}{|\setx|-1}=\frac{1}{|\setx|-1+e^s}
\end{align}
which is achieved by
\begin{align}
  e^s=\frac{(|X|-1)(1-\epsilon)}{\epsilon}.
\end{align}
With this choice for $s$, we have
\begin{align}
  &-(1-\epsilon)\log_2(1-\epsilon)-\sum_{\ell=1}^{|\setx|-1}\frac{\epsilon}{|\setx|-1}\log_2\frac{\epsilon}{|\setx|-1}\nonumber\\
  &\qquad=  \underbrace{-(1-\epsilon)\log_2(1-\epsilon)-\epsilon\log_2\epsilon}_{=:\entop_2(\epsilon)} +\epsilon\log_2(|\setx|-1)\\
  &\qquad= \entop_2(\epsilon) +\epsilon\log_2(|\setx|-1)\label{eq:mary uc}
\end{align}
where $\entop_2(\cdot)$ is the binary entropy function. The term \eqref{eq:mary uc} corresponds to the conditional entropy of a $|\setx|$-ary symmetric channel with uniform input, see Figure~\ref{fig:qary} for an illustration. We conclude that by hard-decision decoding, we can achieve
\begin{align}
  \ttrans^\textsf{hd} = \left[\entop(X)-[\entop_2(\epsilon) +\epsilon\log_2(|\setx|-1)]\right]^+
\end{align}
where
\begin{align}
  \epsilon &= 1-\Pr[X=\omega(Y)]\\
  &=1-\sum_{a\in\setx}P_X(a)\int_{\sety_a}p_{Y|X}(\tau|a)\de \tau.
\end{align}
\subsection{Binary Hard-Decision Decoding}

Suppose the channel input is the binary vector $\vecB=B_1\dotsb B_m$ and the decoder uses $m$ binary quantizers, i.e., we have
\begin{align}
  \sety = \sety_{0j}\cup\sety_{1j},\quad \sety_{1j}=\sety\setminus\sety_{0j}\\
  \omega_j\colon \sety\to\{0,1\},\quad \omega_j(b)=a\Leftrightarrow b\in\sety_{ja}.
\end{align}
The receiver uses a binary Hamming metric, i.e.,
\begin{align}
  q(a,b)&=\mathbbm{1}(a,b),\quad a,b\in\{0,1\}\\
  q^m(\veca,\vecb)&=\sum_{j=1}^m \mathbbm{1}(a_j,b_j)
\end{align}
and we analyze the equivalent metric
\begin{align}
  e^{sq^m(\veca,\vecb)}=\prod_{j=1}^m e^{s\mathbbm{1}(a_j,b_j)},\quad s>0.
\end{align}
Since the decoder uses the same metric for each bit-level $j=1,2,\dotsc,m$, binary hard-decision decoding is an instance of interleaved coded modulation, which we discussed in Section~\ref{subsec:interleaved}. Thus, defining the auxiliary random variable $I$ uniformly distributed on $\{1,2,\dotsc,m\}$ and
\begin{align}
  B&=B_I,\quad \hat{B}=\omega_I(Y)\label{eq:average hd}
\end{align}
we can use the interleaved coded modulation result \eqref{eq:uc interleaving}. We have for the normalized uncertainty
\begin{align}
&\hspace{-2cm}\frac{1}{m}\expop\left[-\log_2\frac{\prod_{j=1}^m e^{s\mathbbm{1}[B_j,\omega_j(Y)]}}{\sum_{\veca\in\{0,1\}^m}\prod_{j=1}^m e^{s\mathbbm{1}[a_j,\omega_j(Y)]}}\right]\nonumber\\
&\overset{\text{\eqref{eq:uc interleaving},\eqref{eq:average hd}}}{=}  \expop\left[-\log_2\frac{e^{s\mathbbm{1}(B,\hat{B})}}{\sum_{a\in\{0,1\}}e^{s\mathbbm{1}(a,\hat{B})}}\right]\\
&=-\Pr(B=\hat{B})\log_2\frac{e^s}{e^s+1}-\underbrace{\Pr(B\neq \hat{B})}_{=:\epsilon}\log_2\frac{1}{e^s+1}\\
  &\ogeq{\ref{eq:entropy inequality}} \entop(\epsilon)
\end{align}
with equality if
\begin{align}
  s\colon \frac{1}{e^s+1}=\epsilon.
\end{align}
Thus, with a hard decision decoder, we can achieve
\begin{align}
\ttrans^\textsf{hd,bin}=\left[\entop(\vecB)-m\entop_2(\epsilon)\right]^+
\end{align}
where
\begin{align}
  \epsilon=\sum_{j=1}^m\frac{1}{m}\sum_{a\in\{0,1\}}P_{B_j}(a)\int_{\sety_{ja}}p_{Y|B_j}(\tau|a)\de\tau.
\end{align}
For uniform input, the rate becomes
\begin{align}
  R_\textsf{uni}^\textsf{hd,bin}=m-m\entop_2(\epsilon)=m[1-\entop_2(\epsilon)].
\end{align}

  \section{Proofs}
\label{sec:proofs}
  \subsection{Achievable Encoding Rate}
  \label{subsec:encoding proof}
  We consider a general shaping set $\sets_n\subseteq\setx^n$, of which $\sett_\epsilon^n(P_X)$ is an instance. An encoding error happens when for message $u\in\{1,2,\dotsc,2^{n\rtrans}\}$ with
  \begin{align}
  \rtrans\geq 0\label{eq:positive rate}
  \end{align}
    there is no $v\in\{1,2,\dotsc,2^{nR'}\}$ such that $C^n(u,v)\in\sets_n$, where $R'=\rcode-\rtrans$. In our random coding experiment, each code word is chosen uniformly at random from $\setx^n$ and it is not in $\sets_n$ with probability
  \begin{align}
  \frac{|\setx|^n-|\sets_n|}{|\setx|^n}=1-\frac{|\sets_n|}{|\setx|^n}.
  \end{align}
The probability that none of $2^{nR'}$ code words is in $\sets_n$ is therefore
\begin{align}
\left(1-\frac{|\sets_n|}{|\setx|^n}\right)^{2^{nR'}}\leq \exp\left(-\frac{|\sets_n|}{|\setx|^n}2^{nR'}\right)
\end{align}
where we used $1-s\leq \exp(-s)$.
We have
\begin{align}
\frac{|\sets_n|}{|\setx|^n}2^{nR'}=2^{n(\rcode-\rtrans+\frac{1}{n}\log_2|\sets_n|-\log_2|\setx|)}.
\end{align}
Thus, for
\begin{align}
\rtrans<\rcode-\left(\log_2|\setx|-\frac{1}{n}\log_2|\sets_n|\right)\label{eq:general achievable encoding rate}
\end{align}
the encoding error probability decays doubly exponentially fast with $n$. This bound can now be instantiated for specific shaping sets. Here, we consider the typical set $\sett^n_\epsilon(P_X)$ as defined in \eqref{eq:def:typ}. The exponential growth with $n$ of $\sett^n_\epsilon(P_X)$ is as follows.
  \begin{lemma}[{Typicality, \cite[Theorem~1.1]{kramer2007topics},\cite[Lemma~19]{orlitsky2001coding}}]
  Suppose $0<\epsilon<\mu_X$. We have
  \begin{align}
  (1-\delta_\epsilon(n,P_X))2^{n(1-\epsilon)\entop(X)}\leq |\sett_\epsilon^n(P_X)|\label{eq:typ 2}
  \end{align}
  where $\delta_\epsilon(P_X,n)$ is such that $\delta_\epsilon(P_X,n)\xrightarrow{n\to\infty}0$ exponentially fast in $n$.
  \end{lemma}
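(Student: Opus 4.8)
The plan is to prove the lower bound \eqref{eq:typ 2} by the standard probabilistic argument: draw a length-$n$ sequence $X^n=X_1\dotsb X_n$ with the entries i.i.d.\ according to $P_X$, and sandwich the probability mass $\Pr(X^n\in\sett^n_\epsilon(P_X))$ between $1-\delta_\epsilon(n,P_X)$ on the one side and $|\sett^n_\epsilon(P_X)|$ times a uniform per-sequence probability bound on the other.

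First I would establish the concentration estimate $\Pr(X^n\notin\sett^n_\epsilon(P_X))\le\delta_\epsilon(n,P_X)$ with $\delta_\epsilon(n,P_X)\to 0$ exponentially fast. For a letter $a\notin\supp P_X$ we have $N(a|X^n)=0$ almost surely, so the typicality constraint \eqref{eq:def:typ} for $a$ is met trivially. For $a\in\supp P_X$, the count $N(a|X^n)=\sum_{i=1}^n\mathbbm{1}(X_i=a)$ is a sum of $n$ i.i.d.\ Bernoulli$(P_X(a))$ indicators, so Hoeffding's inequality gives $\Pr\!\left(\left|N(a|X^n)/n-P_X(a)\right|>\epsilon P_X(a)\right)\le 2\exp(-2n\epsilon^2 P_X(a)^2)\le 2\exp(-2n\epsilon^2\mu_X^2)$, where $\mu_X=\min_{a\in\supp P_X}P_X(a)>0$ and the hypothesis $0<\epsilon<\mu_X$ makes the deviation window well posed. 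A union bound over the at most $|\setx|$ support letters then yields $\delta_\epsilon(n,P_X)\le 2|\setx|\exp(-2n\epsilon^2\mu_X^2)$, which decays exponentially in $n$; a method-of-types bound would work equally well.

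Next I would upper bound the probability of each typical sequence. For $x^n\in\sett^n_\epsilon(P_X)$, writing $P_{X^n}(x^n)=\prod_{a\in\supp P_X}P_X(a)^{N(a|x^n)}$ gives $-\log_2 P_{X^n}(x^n)=\sum_{a\in\supp P_X}N(a|x^n)\bigl(-\log_2 P_X(a)\bigr)$. Here is the one place that needs care: since $-\log_2 P_X(a)\ge 0$ and, by the \emph{lower} side of \eqref{eq:def:typ}, $N(a|x^n)\ge n(1-\epsilon)P_X(a)$, each summand is at least $n(1-\epsilon)P_X(a)\bigl(-\log_2 P_X(a)\bigr)$, whence $-\log_2 P_{X^n}(x^n)\ge n(1-\epsilon)\entop(X)$, i.e.\ $P_{X^n}(x^n)\le 2^{-n(1-\epsilon)\entop(X)}$. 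Combining the two estimates,
\begin{align}
1-\delta_\epsilon(n,P_X)\ \le\ \Pr\!\left(X^n\in\sett^n_\epsilon(P_X)\right)\ =\sum_{x^n\in\sett^n_\epsilon(P_X)}P_{X^n}(x^n)\ \le\ |\sett^n_\epsilon(P_X)|\,2^{-n(1-\epsilon)\entop(X)},
\end{align}
and rearranging gives \eqref{eq:typ 2}.

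The argument is essentially routine once these pieces are assembled; the only genuine subtlety — the "hard part," such as it is — is the bookkeeping in the per-sequence bound: one must pair the \emph{lower} boundary of the typicality window with the nonnegativity of $-\log_2 P_X(a)$ (using the upper boundary would instead produce the exponent $(1+\epsilon)\entop(X)$, which is the companion upper bound on $|\sett^n_\epsilon(P_X)|$, not the one claimed), and one must handle the non-support letters separately so they contribute nothing to either $\entop(X)$ or $\log_2 P_{X^n}(x^n)$ and are trivially typical.
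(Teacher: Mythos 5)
Your proof is correct: the Hoeffding-plus-union-bound concentration step, the per-sequence bound $P_{X^n}(x^n)\le 2^{-n(1-\epsilon)\entop(X)}$ via the lower edge of \eqref{eq:def:typ}, and the sandwich $1-\delta_\epsilon(n,P_X)\le\Pr(X^n\in\sett^n_\epsilon(P_X))\le|\sett^n_\epsilon(P_X)|\,2^{-n(1-\epsilon)\entop(X)}$ together give exactly \eqref{eq:typ 2}. The paper itself does not prove this lemma but cites it from \cite{kramer2007topics} and \cite{orlitsky2001coding}, and your argument is essentially the standard one given there, including the correct handling of non-support letters and of which side of the typicality window to use.
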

Thus,
\begin{align}
\frac{1}{n}\log_2|\sett^n_\epsilon(P_X)|\xrightarrow{n\to\infty}\entop(X).\label{eq:typical size limit}
\end{align}
For $\sets_n=\sett^n_\epsilon(P_X)$ in \eqref{eq:general achievable encoding rate} and by \eqref{eq:typical size limit}, if
\begin{align}
\rtrans&<\rcode-\left[\log_2|\setx|-\entop(X)\right]\\
&=\rcode-\idop(P_X\Vert P_U)\label{eq:divergence size}
\end{align}
then the encoding error probability can be made arbitrarily small by choosing $n$ large enough. Equality in \eqref{eq:divergence size} follows by \eqref{eq:uniform divergence}. By \eqref{eq:positive rate} and \eqref{eq:divergence size}, an achievable encoding rate is
\begin{align}
\rtrans=\left[\rcode-\idop(P_X\Vert P_U)\right]^+
\end{align}
which is the statement of Proposition~\ref{prop:encoding rate}.

\subsection{Achievable Code Rate}
\label{subsec:decoding proof}

\begin{figure}[t]
    \footnotesize
    \centering
    \includegraphics{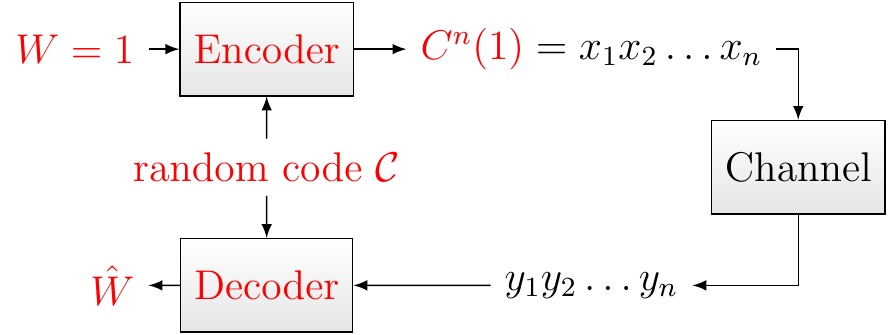}
    \caption{The considered setup.}
    \label{fig:coding experiment}
\end{figure}
We consider the setup in Figure~\ref{fig:coding experiment}, i.e., we condition on that index $W$ was encoded to $C^n(W)=x^n$ and that sequence $y^n$ was output by the channel. For notational convenience, we assume without loss of generality $W=1$. We have the implications
\begin{align}
    \hat{W}\neq 1&\Rightarrow \hat{W}=w'\neq 1\\
    &\Rightarrow L(w'):=\frac{q^n(C^n(w'),y^n)}{q^n(x^n,y^n)}\geq 1\\
    &\Rightarrow \sum_{w=2}^{|\setc|}L(w)\geq 1.
\end{align}
If event $\seta$ implies event $\setb$, then $\Pr(\seta)\leq\Pr(\setb)$. Therefore, we have
\begin{align}
    &\hspace{-2cm}\Pr(\hat{W}\neq 1|X^n=x^n,Y^n=y^n)\leq \Pr\left[\left.\sum_{w=2}^{|\setc|}L(w)\geq 1\right\vert X^n=x^n,Y^n=y^n\right]\\
    &\leq \expop\left[\left.\sum_{w=2}^{|\setc|}L(w)\right\vert X^n=x^n,Y^n=y^n\right]\label{eq:airestim:markov}\\
    &=q^n(x^n,y^n)^{-1}\expop\left[\sum_{w=2}^{|\setc|}q^n(C^n(w),y^n)\right]\label{eq:airestim:pairwise independence}\\
    &= (|\setc|-1)q^n(x^n,y^n)^{-1}\expop\left[q^n(C^n,y^n)\right]\label{eq:airestim:iid random coding}\\
    &\leq |\setc|q^n(x^n,y^n)^{-1}\expop\left[q^n(C^n,y^n)\right]\\
    &= |\setc|\frac{1}{\prod_{i=1}^n q(x_i,y_i)}\prod_{i=1}^n\expop\left[q(C,y_i)\right]\label{eq:airestim:memoryless metric}\\
    &= |\setc|\frac{1}{\prod_{i=1}^n q(x_i,y_i)}\prod_{i=1}^n\sum_{a\in\setx}|\setx|^{-1}q(a,y_i)\\
    &= |\setc|\prod_{i=1}^n\frac{\sum_{a\in\setx}q(a,y_i)}{q(x_i,y_i)|\setx|}
\end{align}
where
\begin{itemize}
    \item Inequality in \eqref{eq:airestim:markov} follows by Markov's inequality~\cite[Section~1.6.1]{gallager2013stochastic}.
    \item Equality in \eqref{eq:airestim:pairwise independence} follows because for $w\neq 1$, the code word $C^n(w)$ and the transmitted code word $C^n(1)$ were generated independently so that $C^n(w)$ and $[C^n(1),Y^n]$ are independent.
    \item Equality in \eqref{eq:airestim:iid random coding} holds because in our random coding experiment, for each index $w$, we generated the code word entries $C_1(w),C_2(w),\dotsc,C_n(w)$ iid.
    \item In \eqref{eq:airestim:memoryless metric}, we used \eqref{eq:airestim:metric}, i.e., that $q^n$ defines a memoryless metric.
\end{itemize}
We can now write this as
\begin{align}
    \Pr(\hat{W}\neq 1|X^n=x^n,Y^n=y^n)&\leq 2^{-n[\hattcode(x^n,y^n,q)-\rcode]}\\
        \text{where }\hattcode(x^n,y^n,q)&=\frac{1}{n}\sum_{i=1}^n\log_2\frac{q(x_i,y_i)}{\sum_{a\in\setx}\frac{1}{|\setx|}q(a,y_i)}\label{airestim:eq:code rate estimate}\\
        \rcode &= \frac{\log_2|\setc|}{n}.
\end{align}
For large $n$, the error probability upper bound is vanishingly small, if
\begin{align}
    \rcode < \hattcode(x^n,y^n,q).\label{airestim:eq:code rate codition}
\end{align}
Thus, $\hattcode(x^n,y^n,q)$ is an \emph{achievable code rate}\index{achievable code rate}, i.e., for a random code $\setc$, if \eqref{airestim:eq:code rate codition} holds, then, with high probability, sequence $x^n$ can be decoded from $y^n$.

\subsection{Achievable Code Rate for Memoryless Channels}
\label{subsec:memoryless decoding proof}
Consider now a \emph{memoryless channel}
\begin{align}
    p_{Y^n|X^n}&=\prod_{i=1}^n p_{Y_i|X_i}\\
    p_{Y_i|X_i}&=p_{Y|X},\quad i=1,2,\dotsc,n.
\end{align}
We continue to assume input sequence $x^n$ was transmitted, but we replace the specific channel output measurement $y^n$ by the random output $Y^n$, distributed according to $p_{Y|X}^n(\cdot|x^n)$. The achievable code rate \eqref{airestim:eq:code rate estimate} evaluated in $Y^n$ is
\begin{align}
    \hattcode(x^n,Y^n,q)=\frac{1}{n}\sum_{i=1}^n\log_2\frac{q(x_i,Y_i)}{\sum_{c\in\setx}\frac{1}{|\setx|}q(c,Y_i)}.\label{airestim:eq:code rate estimate random}
\end{align}
Since $Y^n$ is random, $\hattcode(x^n,Y^n,q)$ is also random. First, we rewrite \eqref{airestim:eq:code rate estimate random} by sorting the summands by the input symbols, i.e.,
\begin{align}
\frac{1}{n}\sum_{i=1}^n\log_2\frac{q(x_i,Y_i)}{\sum_{c\in\setx}\frac{1}{|\setx|}q(c,Y_i)}
=\sum_{a\in\setx}\frac{N(a|x^n)}{n}\left[\frac{1}{N(a|x^n)}\sum_{i\colon x_i=a}\log_2\frac{q(a,Y_i)}{\sum_{c\in\setx}\frac{1}{|\setx|}q(c,Y_i)}\right]\label{airestim:eq:code rate estimate2}
\end{align}
Note that identity \eqref{airestim:eq:code rate estimate2} holds also when the channel has memory. For memoryless channels, we make the following two observations:
\begin{itemize}
    \item Consider the inner sums in \eqref{airestim:eq:code rate estimate2}. For memoryless channels, the outputs $\{Y_i\colon x_i=a\}$ are iid according to $p_{Y|X}(\cdot|a)$. Therefore, by the Weak Law of Large Number~\cite[Section~1.7]{gallager2013stochastic},
    \begin{align}
        \frac{1}{N(a|x^n)}\sum_{i\colon x_i=a}\log_2\frac{q(a,Y_i)}{\sum_{c\in\setx}\frac{1}{|\setx|}q(c,Y_i)}\pconv\expop\left[\left.\log_2\frac{q(a,Y)}{\sum_{c\in\setx}\frac{1}{|\setx|}q(c,Y)}\right|X=a\right]\label{airestim:eq:memoryless proof2}
    \end{align}
    where $\pconv$ denotes convergence in probability~\cite[Section~1.7]{gallager2013stochastic}. That is, by making $n$ and thereby $N(a|x^n)$ large, each inner sum converges in probability to a deterministic value. Note that the expected value on the right-hand side of \eqref{airestim:eq:memoryless proof2} is no longer a function of the output sequence $Y^n$ and is determined by the channel law $p_{Y|X}(\cdot|a)$ according to which the expectation is calculated.
    \item Suppose now for some distribution $P_X$ and $\epsilon\geq 0$, we have $x^n\in\sett^n_\epsilon(P_X)$, in particular,
    \begin{align}
      \frac{N(a|x^n)}{n}\geq (1-\epsilon)P_X(a),\quad a\in\setx.
    \end{align}
    We now have
    \begin{align}
        &\hspace{-1cm}\frac{1}{n}\sum_{i=1}^n\log_2\frac{q(x_i,Y_i)}{\sum_{c\in\setx}\frac{1}{|\setx|}q(c,Y_i)}\nonumber\\
        &\pconv\sum_{a\in\setx}\frac{N(a|x^n)}{n}\expop\left[\left.\log_2\frac{q(a,Y)}{\sum_{c\in\setx}\frac{1}{|\setx|}q(c,Y)}\right|X=a\right]\\
&\geq\expop\left[\log_2\frac{q(X,Y)}{\sum_{c\in\setx}\frac{1}{|\setx|}q(c,Y)}\right]\nonumber\\
&\hspace{1cm}-\epsilon\sum_{a\in\setx}P_X(a)\left|\expop\left[\left.\log_2\frac{q(a,Y)}{\sum_{c\in\setx}\frac{1}{|\setx|}q(c,Y)}\right|X=a\right]\right|\label{airestim:eq:code rate estimate convergence}
          \end{align}
where the expectation in \eqref{airestim:eq:code rate estimate convergence} is calculated according to $P_X$ and the channel law $p_{Y|X}$. In other words, \eqref{airestim:eq:code rate estimate convergence} is an achievable code rate for all code words $x^n$ that are in the shaping set $\sett^n_\epsilon(P_X)$. \end{itemize}

\section*{Acknowledgment}

The author is grateful to Gianluigi Liva and Fabian Steiner for fruitful discussions encouraging this work. The author thanks Fabian Steiner for helpful comments on drafts. A part of this work was done while the author was with the Institute for Communications Engineering, Technical University of Munich, Germany.


\bibliographystyle{IEEEtran}
\bibliography{IEEEabrv,confs-jrnls-v1-DONOTEDIT,references-v5-DONOTEDIT,references}

\begin{thebibliography}{10}
\providecommand{\url}[1]{#1}
\csname url@samestyle\endcsname
\providecommand{\newblock}{\relax}
\providecommand{\bibinfo}[2]{#2}
\providecommand{\BIBentrySTDinterwordspacing}{\spaceskip=0pt\relax}
\providecommand{\BIBentryALTinterwordstretchfactor}{4}
\providecommand{\BIBentryALTinterwordspacing}{\spaceskip=\fontdimen2\font plus
\BIBentryALTinterwordstretchfactor\fontdimen3\font minus
  \fontdimen4\font\relax}
\providecommand{\BIBforeignlanguage}[2]{{%
\expandafter\ifx\csname l@#1\endcsname\relax
\typeout{** WARNING: IEEEtran.bst: No hyphenation pattern has been}%
\typeout{** loaded for the language `#1'. Using the pattern for}%
\typeout{** the default language instead.}%
\else
\language=\csname l@#1\endcsname
\fi
#2}}
\providecommand{\BIBdecl}{\relax}
\BIBdecl

\bibitem{bocherer2015bandwidth}
G.~B{\"o}cherer, F.~Steiner, and P.~Schulte, ``Bandwidth efficient and
  rate-matched low-density parity-check coded modulation,'' \emph{{IEEE} Trans.
  Commun.}, vol.~63, no.~12, pp. 4651--4665, Dec. 2015.

\bibitem{gallager1968information}
R.~G. Gallager, \emph{Information Theory and Reliable Communication}.\hskip 1em
  plus 0.5em minus 0.4em\relax John Wiley \& Sons, Inc., 1968.

\bibitem{kramer2007topics}
G.~Kramer, ``Topics in multi-user information theory,'' \emph{Foundations and
  Trends in Comm. and Inf. Theory}, vol.~4, no. 4--5, pp. 265--444, 2007.

\bibitem{masseyapplied1}
\BIBentryALTinterwordspacing
J.~L. Massey, ``Applied digital information theory {I},'' lecture notes, {ETH
  Zurich}. [Online]. Available:
  \url{http://www.isiweb.ee.ethz.ch/archive/massey_scr/adit1.pdf}
\BIBentrySTDinterwordspacing

\bibitem{elgamal2011network}
A.~El~Gamal and Y.-H. Kim, \emph{Network Information Theory}.\hskip 1em plus
  0.5em minus 0.4em\relax Cambridge University Press, 2011.

\bibitem{orlitsky2001coding}
A.~Orlitsky and J.~R. Roche, ``Coding for computing,'' \emph{{IEEE} Trans. Inf.
  Theory}, vol.~47, no.~3, pp. 903--917, Mar. 2001.

\bibitem{kaplan1993information}
G.~Kaplan and S.~Shamai~(Shitz), ``Information rates and error exponents of
  compound channels with application to antipodal signaling in a fading
  environment,'' \emph{AE\"U}, vol.~47, no.~4, pp. 228--239, 1993.

\bibitem{ganti2000mismatched}
A.~Ganti, A.~Lapidoth, and E.~Telatar, ``Mismatched decoding revisited: General
  alphabets, channels with memory, and the wide-band limit,'' \emph{{IEEE}
  Trans. Inf. Theory}, vol.~46, no.~7, pp. 2315--2328, Nov. 2000.

\bibitem{steiner2016protograph}
F.~Steiner, G.~B\"ocherer, and G.~Liva, ``Protograph-based {LDPC} code design
  for shaped bit-metric decoding,'' \emph{{IEEE} J. Sel. Areas Commun.},
  vol.~34, no.~2, pp. 397--407, Feb. 2016.

\bibitem{bocherer2014probabilistic}
G.~B{\"o}cherer, ``Probabilistic signal shaping for bit-metric decoding,'' in
  \emph{Proc. IEEE Int. Symp. Inf. Theory (ISIT)}, Honolulu, HI, USA, Jun.
  2014, pp. 431--435.

\bibitem{bocherer2016achievable}
\BIBentryALTinterwordspacing
G.~B\"ocherer, ``Achievable rates for shaped bit-metric decoding,'' \emph{arXiv
  preprint}, 2016. [Online]. Available: \url{http://arxiv.org/abs/1410.8075}
\BIBentrySTDinterwordspacing

\bibitem{gallager2013stochastic}
R.~G. Gallager, \emph{Stochastic processes: theory for applications}.\hskip 1em
  plus 0.5em minus 0.4em\relax Cambridge University Press, 2013.

\end{thebibliography}

\end{document}